\definecolor{webgreen}{rgb}{0,.35,0}
\definecolor{webbrown}{rgb}{.6,0,0}
\definecolor{RoyalBlue}{rgb}{0,0,0.9}
\newcommand{\ve}{\varepsilon}
\newtheorem{definition}{Definition}
\newtheorem{theorem}{Theorem}
\newtheorem*{claim*}{Claim}
\newcommand{\ud}{\mathrm{d}}
\newcommand{\oh}{\frac{1}{2}}
\newcommand{\abs}[1]{\left\lvert #1 \right\rvert}
\providecommand{\keywords}[1]{\textbf{\textit{Key words:}} #1}
\author[$\star,\dagger$]{Thomas G. Fai}
\author[$\star$]{Youngmin Park}
\affil[$\star$]{{\small Department of Mathematics, Brandeis University, Waltham, Massachusetts 02453}}
\affil[$\dagger$]{{\small Volen Center for Complex Systems, Brandeis University, Waltham, Massachusetts 02453}}
\title{Global asymptotic stability of the active disassembly model of flagellar length control}
\date{}	
\begin{document}
\maketitle

\begin{abstract}
Organelle size control is a fundamental
question in biology that demonstrates the fascinating ability of cells to maintain homeostasis within
their highly variable environments. Theoretical models describing cellular dynamics have the potential to help elucidate the principles underlying size control.
Here, we perform a detailed study of the active disassembly model proposed in [Fai et al, Length regulation of multiple flagella that self-assemble from a shared pool of components, \emph{eLife}, 8, (2019): e42599]. {\color{black} We construct} a hybrid system which is shown to be well-behaved throughout the domain. We rule out the possibility of oscillations arising in the model and prove global asymptotic stability in the case of two flagella by the construction of a suitable Lyapunov function. {\color{black} Finally, we generalize the model to the case of arbitrary flagellar number in order to study olfactory sensory neurons, which have up to twenty cilia per cell.} We show that our theoretical results may be extended to this case and explore the implications of this universal mechanism of size control.
\end{abstract}

\keywords{length control; flagella; self assembly; dynamical systems}
\vspace{.2in}

\noindent
Robust size control is essential in biology at the levels of organs, cells, and organelles. The absence of size regulation is a marker of pathologies such as cancer \cite{zink2004nuclear}, and the varied sizes of biological structures highlight differences in the fundamental physical constraints on biological systems \cite{haldane1926being,marshall2016cell,milo2015cell}.

At the level of organelles, a particular useful model organism for flagellar length control is the unicellular green algae \emph{Chlamydomonas reinhardtii} (Figure \ref{fig:schematic}(a))\cite{wemmer2007flagellar,gutman2004chlamydomonas,goldstein2015green,salome2019series,wan2020unity}. \emph{Chlamydomonas} uses its two flagella to propel itself through its aqueous environment and the lengths of its flagella are important for maintaining swimming speed and direction \cite{nguyen2005lf1,khona2013anomalies}.

{\color{black}
Each flagellum is made up of a microtubule structure known as the axoneme, in which nine microtubule doublets surround a central pair (referred to as the $9+2$ structure) \cite{morga2013getting}. In flagellates such as \emph{Chlamydomonas}, the sliding motion between these microtubule doublets generated by the molecular motor dynein leads to an emergent beating along the flagellum.

The internal structures of flagella are highly dynamic and in continual turnover. During intraflagellar transport (IFT), protein assemblies called IFT particles are transported by motor molecules, some of which carry particles toward the tip of the flagella (anterograde motion) and others that carry IFT particles toward the base of the flagella (retrograde motion) \cite{kozminski1993motility}. These IFT particles move processively with speeds that are approximately constant. There is a difference in speed between anterograde (2 \textmu m/s) and retrograde (3.5 \textmu m/s)  transport is explained by the fact that different species of molecular motors---kinesin and dynein---are responsible for walking along the microtubule filaments in the anterograde and retrograde directions, respectively \cite{scholey2003intraflagellar}.
}

Experiments on \emph{Chlamydomonas} have shown that flagellar maintenance and length control is a dynamic process, which is closely linked to protein transport by IFT. {\color{black} For example, researchers have performed severing experiments in which one of the two flagella is removed from the cell and observed to regenerate. This severing may be achieved experimentally through either mechanical shearing \cite{rosenbaum1969flagellar} or laser ablation \cite{ludington2012organelle}. After severing, the shortened flagellum regrows to approach a new and somewhat smaller steady-state length. Interestingly, the unsevered flagellum also responds; it shortens until it reaches approximately the same length as the severed flagellum, revealing the coupling between the two flagella. After this relatively rapid initial phase (order of seconds), there is a subsequent phase on a slower timescale (order of minutes)} in which the two flagella grow back in unison to the initial steady-state length (shown in Figure \ref{fig:schematic}(b)).
\begin{figure}[]
\centering
	 \vspace*{-.4in}
	 \hspace*{-.15in}\includegraphics[scale=0.9]{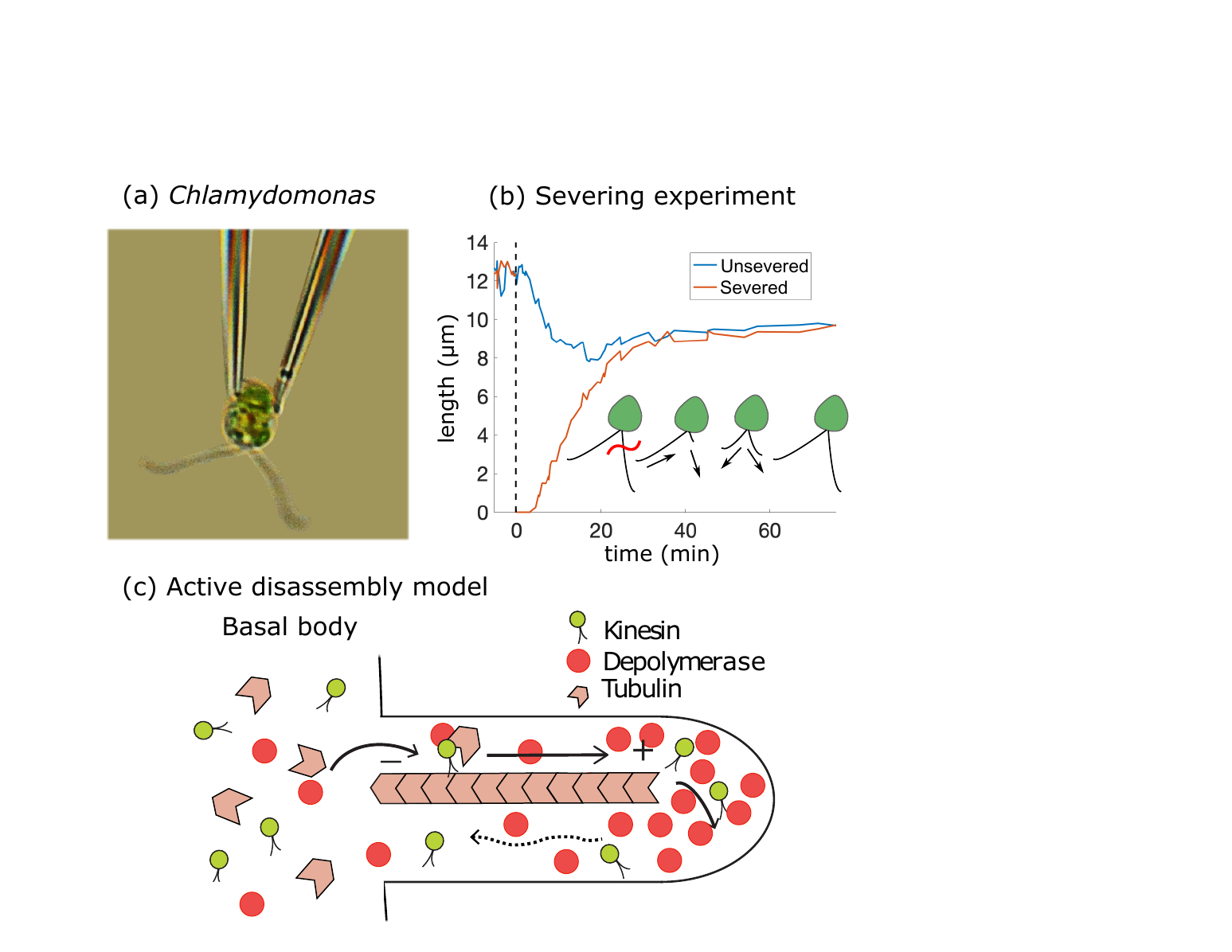}
	 \vspace*{-.4in}
	 \caption{\emph{Chlamydomonas reinhardtii} and the active disassembly model of flagellar length control. (a) Experimental image of \emph{Chlamydomonas} from \cite{goldstein2015green} provided courtesy of the author. The flagella have lengths of approximately 12 \textmu m. {\color{black} (b)} Upon severing one flagellum, the severed flagellum begins to regenerate while the unsevered flagellum shortens. The flagella rapidly equalize at an intermediate length before returning to their initial lengths on a slower timescale. {\color{black} Data from \cite{ludington2012organelle} provided by the authors}, (c) Schematic of the active disassembly model, in which the concentration gradient generated by a diffusing depolymerase leads to a length-dependent rate of disassembly at the flagellar tip.}
         \label{fig:schematic}
\end{figure}

In \cite{fai2019length}, we proposed an active disassembly model for flagellar length regulation in \emph{Chlamydomonas} based on physical first principles of flux balance (Figure \ref{fig:schematic}(c)). By providing a mechanism for length-dependent disassembly, this model leads to simultaneous length control consistent with the length-equalization observed in severing experiments.

In this work, we consider the mathematical properties of the system of nonlinear ordinary differential equations that make up the active disassembly model. While it is well-known that in principle nonlinear equations may lead to behaviors such as finite time blow-up or oscillations, we show that these behaviors can be ruled out over a wide range of parameters.
{\color{black}
Moreover, we generalize the model to the case of arbitrary flagellar number, motivated by the olfactory sensory neurons which have on the order of 5--20 cilia per cell.}

We show that these equations have a global asymptotically stable steady-state solution. To prove this, we find a Lyapunov function for the dynamics. The Lyapunov function also implies there are no limit cycles or periodic oscillations. Further, we construct a hybrid system to ensure that the flagellar lengths remain non-negative, and this yields regions of state space from which all trajectories contract to a single point on the boundary before flowing toward steady-state. We interpret these trajectories in light of existing experimental data. In addition, we provide alternate proofs of global asymptotic stability and the non-existence of periodic oscillations that do not rely on the construction of a Lyapunov function. We discuss how these results generalize to other situations.

By characterizing the mathematical properties of the active disassembly model, we aim to provide a strong foundation for subsequent applications and analysis of the model. In the Discussion, we note several open questions regarding the mathematical model and its application to organelle size control.

\section{Active disassembly model of flagellar length control}
\label{sec:model}
{\color{black}
As described in the Introduction, the process of IFT involves the transport of various biomolecules between the flagellar base and tip, and is believed to be closely linked with the dynamics of flagellar length. In the development of the mathematical model described next, we conceptually simplify this process by capturing only the essential biomolecular constituents for length control. In particular, we restrict attention to the tubulin that makes up the axoneme structure, the molecular motors that power transport along the axoneme, and the depolymerase conjectured to be necessary for length control. We assume that both assembly and disassembly are in continual competition so that the overall growth rate is the net result of these processes. Next, we describe the assembly and disassembly rates that give rise to differential equations for the flagellar lengths.
}

Let $L_1(t)$ and $L_2(t)$ denote the lengths of two flagella on a single \emph{Chlamydomonas} organism at a particular time $t$. The flagellum is composed of tubulin filaments, and we assume there is a total amount $T$ of tubulin available. {\color{black} The units of tubulin are given in microns, so that the pool size and flagellar lengths have the same units and the amount $T_f$ of free tubulin at the flagellar base is given by $T_f = T-L_1-L_2$.}

We assume there are a total number $M$ of anterograde molecular motors, of which a number $M_f$ are free in the basal pool while the others are undergoing IFT. In line with the pattern of motion found experimentally for kinesin \cite{chien2017dynamics}, the motors move ballistically in the anterograde direction {\color{black} with velocity $v$} and diffusively in the retrograde direction with diffusion coefficient $D$.

The injection flux $J$ of IFT particles into the flagella satisfies mass action kinetics, so that
 \begin{equation}
J=k_\text{on}M_f. \label{eq:flux0_main}
\end{equation}
There is a separation of timescales between the motion of IFT particles, which traverse the flagellum in several seconds, and the flagellar length dynamics, which varies over a timescale of several minutes \cite{hendel2018diffusion}. {\color{black} As shown in Appendix \ref{app:flux}}, upon combining \eqref{eq:flux0_main} with conservation of protein number, this separation of timescales leads to a quasi-steady state approximation for the flux:
\begin{equation}
J = \frac{k_\text{on}M/2}{1+k_\text{on}(L_1+L_2)/2v+k_\text{on}(L_1^2+L_2^2)/4D}.
\end{equation}

The flux $J$ represents the number of IFT particles injected per second into the flagella, and these particles are assumed to carry an amount of tubulin proportional to the free tubulin in the basal pool. Therefore, the overall speed of assembly is equal to $\gamma J \left(T-L_1-L_2\right)$, where $\gamma$ is a proportionality constant that represents the fraction of tubulin arriving at the flagellar tip that is incorporated into the flagellum.

Disassembly is taken to be a linear function of the depolymerase concentration at the flagellar tip, i.e.~each depolymerizing enzyme located near the flagellar tip removes a constant number of tubulin subunits per second. This implies that for the $i$\textsuperscript{th} flagellum for $i=1,\,2$, the speed of disassembly is equal to $d_0+d_1 \frac{J L_i}{D}$, {\color{black} where $d_0$ and $d_1$ are the coefficients of the first-order Taylor series approximation. As explained in detail in Appendix \ref{app:flux}, this follows from the assumption that there is a depolymerase which has the same ballistic-to-diffusive pattern of motion as kinesin: the disassembly rate equals $d_0+d_1 c_d(L_i)$, where $c_d(L_i)$ is the concentration of depolymerase at the flagellar tip.}

Setting the growth rate to be the difference between assembly and disassembly speeds leads to a system of coupled nonlinear differential equations:
\begin{align}
L_1' &=  \gamma J(T-L_1 - L_2) - d_0 - d_1 \frac{JL_1}{D} \label{eq:active1_det}\\
L_2' &=  \gamma J(T-L_1 - L_2) - d_0 - d_1 \frac{JL_2}{D}. \label{eq:active2_det}
\end{align}
where $J\equiv J(L_1,L_2)$ satisfies
\begin{equation}\label{eq:J}
J(L_1,L_2) = \frac{k_\text{on} M/2}{1 + k_\text{on} (L_1+L_2)/2v + k_\text{on} (L_1^2 + L_2^2)/4D}.   
\end{equation}

\subsection{Nondimensionalization}
We next rewrite \cref{eq:active1_det,eq:active2_det} in dimensionless form. In terms of the nondimensional length $\widetilde{L}=L/T$ and nondimensional time $\widetilde{t}= t \gamma k_\text{on} M/2$ as well as the dimensionless parameters $\pi_0=2d_0/\gamma k_\text{on} MT$, $\pi_1=d_1/\gamma D$, $\pi_2=k_\text{on}T/2v$, and $\pi_3=k_\text{on}T^2/4D$, the system of equations \cref{eq:active1_det,eq:active2_det} becomes
\begin{align}
\frac{\ud \widetilde{L}_1}{\ud \widetilde{t}}=\widetilde{J}\left(1-\widetilde{L}_1-\widetilde{L}_2\right)-\pi_0-\pi_1\widetilde{J}\widetilde{L}_1 \label{eq:active1_nondim}\\
\frac{\ud \widetilde{L}_2}{\ud \widetilde{t}}=\widetilde{J}\left(1-\widetilde{L}_1-\widetilde{L}_2\right)-\pi_0-\pi_1\widetilde{J}\widetilde{L}_2,\label{eq:active2_nondim}
\end{align}
where the dimensionless flux $\widetilde{J}=2J/k_\text{on} M$ satisfies
\begin{equation}\label{eq:J_nondim}
\widetilde{J}=\frac{1}{1+\pi_2(\widetilde{L}_1+\widetilde{L}_2)+\pi_3(\widetilde{L}_1^2+\widetilde{L}_2^2)}.
\end{equation}
We interpret the dimensionless parameters as follows: $\pi_0$ is the ratio of disassembly and assembly rates, $\pi_1$ is the depolymerase activity level, $\pi_2=\tau_b/\tau_i$ is the ratio of the ballistic timescale $\tau_b:=T/v$ of IFT transport to the injection timescale $\tau_i:=k_\text{on}^{-1}$, and $\pi_3=\tau_d/\tau_i$ is an analogous ratio of the diffusive timescale $\tau_d=T^2/2D$ to the injection timescale. (We could equivalently think of $\pi_2$ and $\pi_3$ as ratios of lengthscales related to the same physical processes.)

It is a straightforward calculation to verify that both flagella have the same steady-state length $\widetilde{L}_\text{ss}$ given by
\begin{equation}
\label{eq:Lss}
\widetilde{L}_\text{ss} = \frac{1+\pi_0 \pi_2 + \pi_1/2}{2\pi_0 \pi_3}\left( \sqrt{1+\frac{2\left(1-\pi_0\right)\pi_0 \pi_3}{\left(1+\pi_0 \pi_2+\pi_1/2\right)^2}}-1\right).
\end{equation} 

\subsection{Generalization to arbitrary flagellar number}
Here we generalize the flagellar length control model to arbitrary $N$. We are motivated by cells such as olfactory sensory neurons (OSN's), which have on the order of a dozen cells, as we shall discuss later on in detail. In this case, the equations corresponding to \cref{eq:active1_det,eq:active2_det,eq:J} become:
\begin{align}
\frac{\ud L_i}{\ud t} &= \gamma J\left(T-\sum_{j=1}^N L_j\right)-d_0-d_1\frac{JL_i}{D}, \quad i=1,\dots,N,\\
J(L_1,\dots,L_N) &= \frac{k_\text{on} M/N}{1 + \left(k_\text{on}/Nv\right)\sum_{j=1}^N L_j+ \left(k_\text{on}/2ND\right)\sum_{j=1}^N L_j^2}.
\end{align}

To put the equations in dimensionless form, we follow the approach taken in the case $N=2$ above by introducing the dimensionless variables $\widetilde{L}_i=L_i/T$ and $\widetilde{t}= t \gamma k_\text{on} M/N$ as well as the dimensionless parameters $\pi_0=Nd_0/\gamma k_\text{on} MT$, $\pi_1=d_1/\gamma D$, $\pi_2=k_\text{on}T/Nv$, and $\pi_3=k_\text{on}T^2/ND$. This leads to the following system, which generalizes \cref{eq:active1_nondim,eq:active2_nondim,eq:J_nondim} above:
\begin{align}
\frac{\ud \widetilde{L}_i}{\ud \widetilde{t}}&=\widetilde{J}\left(1-\sum_{j=1}^N \widetilde{L}_j\right)-\pi_0-\pi_1\widetilde{J}\widetilde{L}_i, \quad i=1,\dots,N, \label{eq:activeN_nondim}\\
\widetilde{J}&=\frac{1}{1+\pi_2 \sum_{j=1}^N \widetilde{L}_j+\pi_3\sum_{j=1}^N \widetilde{L}_j^2}.
\end{align}
As in the case $N=2$, \eqref{eq:activeN_nondim} leads to a unique fixed point with $\widetilde{L}_i=\widetilde{L}_\text{ss}$ for $i=1,\dots,N$ with
\begin{equation}
\label{eq:LssN}
\widetilde{L}_\text{ss} = \frac{1+\pi_0 \pi_2 + \pi_1/N}{2\pi_0 \pi_3}\left( \sqrt{1+\frac{4\left(1-\pi_0\right)\pi_0 \pi_3}{N\left(1+\pi_0 \pi_2+\pi_1/N\right)^2}}-1\right),
\end{equation}
which decays as $1/N$ as $N\to\infty$.

Note that, in the limit $N\to\infty$ for fixed $T$ and $M$, the dimensionless basal disassembly parameter $\pi_0$ satisfies $\pi_0 \to \infty$. This is because the injection rate into any one flagellum, which goes as $1/N$, becomes smaller and smaller in comparison to the $N$-independent basal disassembly $d_0$, resulting in steady-state lengths of zero beyond a critical value of $N$. However, in the special case $d_0=0$, one obtains flagella of progressively smaller but nonzero lengths as $N\to 0$.

{\color{black}
\subsection{Ofactory sensory neuron cilia}
We next apply the active disassembly model to data on olfactory sensory neurons (OSN's) from the mouse nasal septum. These cells each have on the order of 5--20 cilia, and the number depends on the cell's location the within the dorsal zone. (The structures of cilia and flagella are the same and the terms may be used interchangeably.) In particular, it is reported in \cite{challis2015olfactory} that cells in the anterior region have $N_a = 14 \pm 2.9$ cilia, whereas cells in the posterior region have $N_p = 5.6 \pm 1.8$ cilia. Not only the cilia number but also the average lengths are location-dependent, with anterior cilia having lengths of $L_a = 14.7 \pm 10.7$ \textmu m and posterior cilia having shorter lengths of $L_p  =2.8 \pm 2.1$ \textmu m.
 \begin{figure}[]
\centering
	 \hspace*{.45in}\includegraphics[scale=1,trim = 1.5in 3.5in 1in 1.5in, clip]{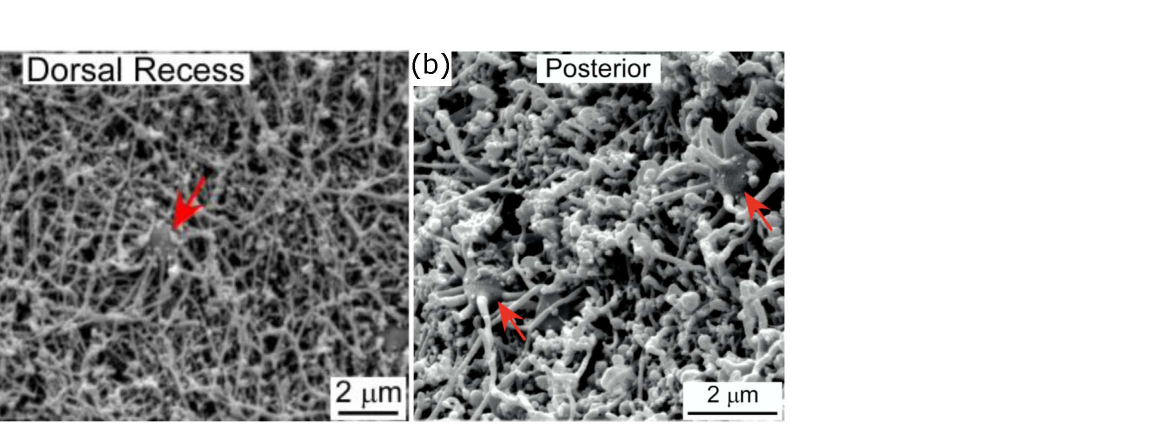}
         \caption{Surface SEM images from the dorsal recess and posterior mouse nasal septum adapted from \cite{challis2015olfactory} (a) Examination of the highlighted dorsal recess OSN yields an estimated diameter of 1.2 \textmu m, (b) Posterior OSN's are estimated to have a diameter of 0.93 \textmu m. Images reprinted from \cite{challis2015olfactory} (Figs.~1(D) and S2(D)) with permission from Elsevier.}
         \label{fig:OSN}
\end{figure}

This data provides an opportunity to apply our model of length control to study the correlation between cilia number and length. Notably, it is \emph{not} the case that posterior cells having fewer cilia tend to have longer cilia. Interpreting this data through the active disassembly model, which in the limit of large $N$ implies the steady-state lengths satisfy $L \approx T/N$, this implies that cells in different regions have different tubulin pool sizes. Accordingly, we estimate the pool sizes of OSN's in the anterior to be $T_a \approx N_a \cdot L_{a} = 225$ \textmu m and for OSN's in the posterior to be $T_p \approx N_p \cdot L_{p} = 18$ \textmu m.

If we assume protein concentrations are held roughly constant across the dorsal zone, this means the pool size should scale with volume, and based on the spherical shape of OSN's implies a scaling of $T \propto R^3$. Plugging in the numbers above, the model predicts that $R_a/R_p = \sqrt[3]{225/18} \approx 2.3$, whereas direct measurement of the cell radii from published images of OSN's yields a ratio of $R_a/R_p \approx 1.3$. As data is sparse, we use an available image from the dorsal recess to approximate the diameter of anterior OSN's. See Figure \ref{fig:OSN}, which shows how the OSN diameters are estimated using images from \cite{challis2015olfactory}.

This analysis leads to several conclusions. First, the assumption of constant concentrations of ciliary proteins appears to be inconsistent with the variations of cilia length and number observed across the dorsal zone. This indicates that concentration gradients may be responsible for sustaining the different pool sizes predicted by the model. Interestingly, computing the ratio of cilia number yields $N_a/N_p \approx 2.3$, which suggests that cilia number scales with volume. The mechanism for this apparent scaling relation remains an open question.
}
\section{Nonlinear behavior}
Next, we turn to the mathematical analysis of the dynamical equations.
To simplify notation, we drop tildes and use the dimensionless formulation for the remainder of the article. We will first consider the case of $N=2$ flagella, and later on show that the key results generalize to arbitrary flagellar number.
The active disassembly model \cref{eq:active1_nondim,eq:active2_nondim} is not a gradient system, i.e.~it cannot be written as the gradient of an energy function $E(L_1,L_2)$. This is proven by contradiction. Suppose there is an energy function such that $L_1' = \partial E/\partial L_1$ and $L_2' = \partial E/\partial L_2$. Then, by equality of mixed partials
\begin{equation}
 \frac{\partial^2 E}{\partial L_1\partial L_2} = \frac{\partial L_1'}{\partial L_2} = \frac{\partial L_2'}{\partial L_1} = \frac{\partial^2 E}{\partial L_2\partial L_1},
\end{equation}
i.e.\
\begin{equation}
\frac{\partial \left(J\left(1-L_1-L_2\right)-\pi_0-\pi_1JL_1\right)}{\partial L_2} = \frac{\partial \left(J\left(1-L_1-L_2\right)-\pi_0-\pi_1JL_2\right)}{\partial L_1}.
\end{equation}
However, the left-hand side of the {\color{black} equality} is equal to
\begin{equation}
-J+\left(1-L_1-L_2\right)\frac{\partial J}{\partial L_2}-\pi_1 L_1 \frac{\partial J}{\partial L_2},
\end{equation}
whereas the right-hand side is equal to
\begin{equation}
-J+\left(1-L_1-L_2\right)\frac{\partial J}{\partial L_1}-\pi_1 L_2 \frac{\partial J}{\partial L_1}.
\end{equation}
Because $J(L_1,L_2)=J(L_2,L_1)$, we have $\partial J/\partial L_1=\partial J/\partial L_2$. Assuming $\pi_1\neq 0$, {\color{black} it follows} that there is equality if and only if $L_1=L_2$. We have thereby reached the contradiction since $L_1 \neq L_2$ in general. It follows that no energy function exists.

However, using a technique applicable to many nonlinear dynamical systems, we may construct a Lyapunov function to show that the steady state is globally asymptotically stable. First, we specify the subspace $\Omega$ with $(L_1,L_2) \in \Omega \subset \mathbb{R}^2$ over which the model is defined. Because we must have $L_1,\, L_2 \ge 0$ for the variables to have physical meaning as lengths, $\Omega$ is contained within the first quadrant. Moreover, the sum of lengths cannot exceed the total pool size so that $L_1 + L_2 \le 1$. To summarize, $\Omega$ is defined to be the triangular region in the first quadrant bound by the lines $L_1 = 0$, $L_2 = 0$, and $L_1+L_2 \le 1$. 

We define the deviations from steady-state $\Delta L_1 := L_1-L_\text{ss}$ and $\Delta L_2 := L_2-L_\text{ss}$ on $\Omega$ and express the dynamics in terms of $\Delta L_1$ and $\Delta L_2$:
\begin{align}
\frac{\ud L_1}{\ud t} &= J\left(1-2L_\text{ss}-\Delta L_1-\Delta L_2\right)-\pi_0-\pi_1J\left(L_\text{ss}+\Delta L_1\right) \label{eq:active1_lyap1}\\
\frac{\ud L_2}{\ud t} &= J\left(1-2L_\text{ss}-\Delta L_1-\Delta L_2\right)-\pi_0-\pi_1J\left(L_\text{ss}+\Delta L_2\right) \label{eq:active2_lyap1},
\end{align}
or equivalently
\begin{align}
\frac{\ud L_1}{\ud t} &= J\left(1-2L_\text{ss}\right)-J\left(\Delta L_1+\Delta L_2\right)-\pi_0-\pi_1JL_\text{ss}-\pi_1J\Delta L_1
\label{eq:active1_lyap2}\\
\frac{\ud L_2}{\ud t} &= J\left(1-2L_\text{ss}\right)-J\left(\Delta L_1+\Delta L_2\right)-\pi_0-\pi_1JL_\text{ss}-\pi_1J\Delta L_2 \label{eq:active2_lyap2}.
\end{align}
By symmetry, the steady-state lengths $L_{1,\text{ss}}$ and $L_{2,\text{ss}}$ are equal. We denote the common steady-state length by $L_\text{ss}$. It is determined by setting the derivative equal to zero in \eqref{eq:active1_lyap2} or \eqref{eq:active2_lyap2}, which yields the identity
\begin{equation} \label{eq:active_ss}
0 = J_\text{ss}\left(1-2L_\text{ss}\right)-\pi_0-\pi_1J_\text{ss}L_\text{ss},
\end{equation}
where
\begin{equation}\label{eq:J_ss}
J_\text{ss}=\frac{1}{1+2\pi_2L_\text{ss}+2\pi_3L_\text{ss}^2}.
\end{equation}
(Eq.~\eqref{eq:active_ss} yields a quadratic equation for $L_\text{ss}$, which has a unique positive root given by \eqref{eq:Lss}.)

We next use the steady-state relationship \eqref{eq:active_ss} to rewrite \cref{eq:active1_lyap2,eq:active2_lyap2} in a different form. Before we proceed, we first note the identities given by
\begin{align}
 J\left(1-2L_\text{ss}\right)&=(J-J_\text{ss})\left(1-2L_\text{ss}\right)+J_\text{ss}\left(1-2L_\text{ss}\right) \label{eq:active1_interm1}\\
 \pi_1JL_\text{ss} &=  \pi_1(J-J_\text{ss})L_\text{ss}+ \pi_1J_\text{ss}L_\text{ss}, \label{eq:active2_interm1}
\end{align}
where the first line follows from adding and subtracting $J_\text{ss}\left(1-2L_\text{ss}\right)$ and the second line follows from adding and subtracting $\pi_1J_\text{ss}L_\text{ss}$. Plugging the above expressions into \cref{eq:active1_lyap2,eq:active2_lyap2} and using the steady-state relation \eqref{eq:active_ss} yields
\begin{align}
\frac{\ud L_1}{\ud t} &= \left(J-J_\text{ss}\right)\left(1-2L_\text{ss}\right)-J\left(\Delta L_1+\Delta L_2\right)-\pi_1\left(J-J_\text{ss}\right)L_\text{ss}-\pi_1J\Delta L_1
\label{eq:active1_lyap3}\\
\frac{\ud L_2}{\ud t} &= \left(J-J_\text{ss}\right)\left(1-2L_\text{ss}\right)-J\left(\Delta L_1+\Delta L_2\right)-\pi_1\left(J-J_\text{ss}\right)L_\text{ss}-\pi_1J\Delta L_2 \label{eq:active2_lyap3}.
\end{align}

Adding and subtracting the above equations yields
\begin{align}
\frac{\ud L_1}{\ud t}+\frac{\ud L_2}{\ud t} &= 2 \left(J-J_\text{ss}\right)\left(1-2L_\text{ss}-\pi_1 L_\text{ss}\right)-\left(2+\pi_1\right)J\left(\Delta L_1+\Delta L_2\right) \label{eq:active1_lyap12}\\
\frac{\ud L_1}{\ud t}-\frac{\ud L_2}{\ud t} &= -\pi_1J\left(\Delta L_1-\Delta L_2\right) \label{eq:active2_lyap12}
\end{align}

It follows that
\begin{align}
\oh \frac{\ud}{\ud t}\left( \Delta L_1+\Delta L_2\right)^2 &= 2 \left(J-J_\text{ss}\right)\left(1-2L_\text{ss}-\pi_1 L_\text{ss}\right)\left(\Delta L_1+\Delta L_2\right)-\left(2+\pi_1\right)J\left(\Delta L_1+\Delta L_2\right)^2  \label{eq:active1_interm2}\\
\oh \frac{\ud}{\ud t}\left( \Delta L_1-\Delta L_2\right)^2 &= -\pi_1J\left(\Delta L_1-\Delta L_2\right)^2. \label{eq:active2_interm2}
\end{align}
where we have used $\ud L_i/\ud t = \ud (L_\text{ss}+\Delta L_i)/ \ud t = \ud \Delta L_i/\Delta t$ for $i=1,\,2$ and
\begin{align}
\oh \frac{\ud (\Delta L_1 + \Delta L_2)^2}{\ud t} &=  (\Delta L_1 + \Delta L_2) \frac{\ud (\Delta L_1 + \Delta L_2)}{\ud t} \\
\oh \frac{\ud (\Delta L_1 - \Delta L_2)^2}{\ud t} &=  (\Delta L_1 - \Delta L_2) \frac{\ud (\Delta L_1 - \Delta L_2)}{\ud t}.
\end{align}

We use \cref{eq:active1_interm2,eq:active2_interm2} to form a Lyapunov function for the dynamics. Let
\begin{equation}
\phi(\Delta L_1,\Delta L_2) = \oh \left[\left(\Delta L_1+\Delta L_2\right)^2+\frac{\pi_3}{\pi_1}\left(\Delta L_1-\Delta L_2\right)^2\right].
\end{equation}
\begin{claim*}
$\phi$ is a Lyapunov function on $\Omega$.
\end{claim*}
\noindent We first comment on some implications of the above claim before proceeding with its proof. It follows from the claim that there is a unique steady-state $L_\text{ss}$ at which $(L_1,L_2)=(L_\text{ss},L_\text{ss}) \in \Omega$ and that no limit cycles exist in $\Omega$.

\begin{proof}
Since it is the sum of squared terms with positive factors, $\phi \ge 0$. It remains to show that $\phi' \le 0$ with equality achieved only at the unique steady-state $L_1=L_2=L_\text{ss}$. Because the second term of $\phi$ has a non-positive derivative (by \eqref{eq:active2_interm2}), a sufficient condition for $\phi' \le 0$ is that the right-hand side of \eqref{eq:active1_interm2} is negative away from the steady-state, i.e.~
\begin{equation}
2 \left(J-J_\text{ss}\right)\left(1-2L_\text{ss}-\pi_1 L_\text{ss}\right)\left(\Delta L_1+\Delta L_2\right)-\left(2+\pi_1\right)J\left(\Delta L_1+\Delta L_2\right)^2 < 0,
\label{eq:lyap_claim}
\end{equation}
except when $\Delta L_1 =\Delta L_2=0$ in which case it is zero. As we show next, in some cases \eqref{eq:lyap_claim} holds and the claim immediately follows. Later on, we show how, if \eqref{eq:lyap_claim} does not hold, we may use the second term of $\phi$ to obtain the desired result $\phi' \le 0$.

It follows from the steady-state condition \eqref{eq:active_ss} that
\begin{equation}
1-2L_\text{ss}-\pi_1L_\text{ss} = \frac{\pi_0}{J_\text{ss}}.
\end{equation}
Therefore, the first term on the left-hand side of \eqref{eq:lyap_claim} may be rewritten as
\begin{align}
2 \left(J-J_\text{ss}\right)\left(1-2L_\text{ss}-\pi_1 L_\text{ss}\right)\left(\Delta L_1+\Delta L_2\right)&= \frac{2\pi_0}{J_\text{ss}}\left(J-J_\text{ss}\right)\left(\Delta L_1+\Delta L_2\right) \notag \\
&= 2\pi_0J_\text{ss}^{-1}\left(\frac{1}{J^{-1}}-\frac{1}{J_\text{ss}^{-1}}\right)\left(\Delta L_1+\Delta L_2\right) \notag \\
&= 2\pi_0J_\text{ss}^{-1}\left(\frac{J_\text{ss}^{-1}-J^{-1}}{J_\text{ss}^{-1}J^{-1}}\right)\left(\Delta L_1+\Delta L_2\right) \notag\\
&= 2\pi_0J\left(J_\text{ss}^{-1}-J^{-1}\right)\left(\Delta L_1+\Delta L_2\right).
\label{eq:lyap_claim_int0}
\end{align}
To simplify the right-hand side of the above expression, we use the definition of the flux \eqref{eq:J_nondim} to get
\begin{align}
J_\text{ss}^{-1}-J^{-1} &= 1+2\pi_2L_\text{ss}+2\pi_3L_\text{ss}^2-\left(1+\pi_2\left(L_1+L_2\right)
+\pi_3\left(L_1^2+L_2^2\right)\right) \notag \\
&=-\pi_2\left(\Delta L_1+\Delta L_2\right)-\pi_3\left(L_\text{ss}^2-L_1^2+L_\text{ss}^2- L_2^2\right). \label{eq:lyap_claim_int1}
\end{align}
Note that
\begin{align}
L_1^2-L_\text{ss}^2+L_2^2-L_\text{ss}^2  &= (L_1+L_\text{ss})\Delta L_1+(L_2+L_\text{ss})\Delta L_2 \notag \\
&= L_1\Delta L_1+L_2\Delta L_2+L_\text{ss}(\Delta L_1+\Delta L_2), \label{eq:lyap_claim_int2}
\end{align}
and moreover
\begin{align}
L_1\Delta L_1+L_2\Delta L_2 &= \frac{L_1+L_2}{2}\left(\Delta L_1+\Delta L_2\right)+\left(L_1-\frac{L_1+L_2}{2}\right)\Delta L_1+\left(L_2-\frac{L_1+L_2}{2}\right)\Delta L_2 \notag \\
&= \frac{L_1+L_2}{2}\left(\Delta L_1+\Delta L_2\right)+\frac{L_1-L_2}{2}\Delta L_1-\frac{L_1-L_2}{2}\Delta L_2 \notag \\
&= \frac{L_1+L_2}{2}\left(\Delta L_1+\Delta L_2\right)+\oh\left(\Delta L_1-\Delta L_2\right)^2 . \label{eq:lyap_claim_int3}
\end{align}
Combining \cref{eq:lyap_claim_int0,eq:lyap_claim_int1,eq:lyap_claim_int2,eq:lyap_claim_int3}, for the left-hand side of \eqref{eq:lyap_claim} we have
\begin{align}
2\pi_0J\left(J_\text{ss}^{-1}-J^{-1}\right)\left(\Delta L_1+\Delta L_2\right) &= -2\pi_0 J\left(\pi_2\left(\Delta L_1+\Delta L_2\right)^2+\pi_3 L_\text{ss}\left(\Delta L_1+\Delta L_2\right)^2 \right. \notag \\
&\left.+\frac{\pi_3}{2}\left(L_1+L_2\right)  \left(\Delta L_1+\Delta L_2\right)^2+\frac{\pi_3}{2} \left(\Delta L_1-\Delta L_2\right)^2\left(\Delta L_1+\Delta L_2\right)\right). \label{eq:lyap_claim_int4}
\end{align} 
To show the above expression is negative, we consider two cases. In the case $\Delta L_1+\Delta L_2 > 0$, the term in parentheses on the right-hand side of \eqref{eq:lyap_claim_int4} is a sum of squared terms multiplied by positive numbers. Therefore the right-hand side of \eqref{eq:lyap_claim_int4} is negative.

In the case $\Delta L_1+\Delta L_2 < 0$, we may bound the positive term by the negative terms in the right-hand side of \eqref{eq:lyap_claim_int4} to show that the overall expression is negative. In this case, $(L_1,L_2)$ must reside in a triangular subset $\Omega^t \subset \Omega$ (the striped region in Fig.~\ref{fig:proof_schema}).
\begin{figure}[]
\centering
	 \includegraphics[width=0.4\textwidth]{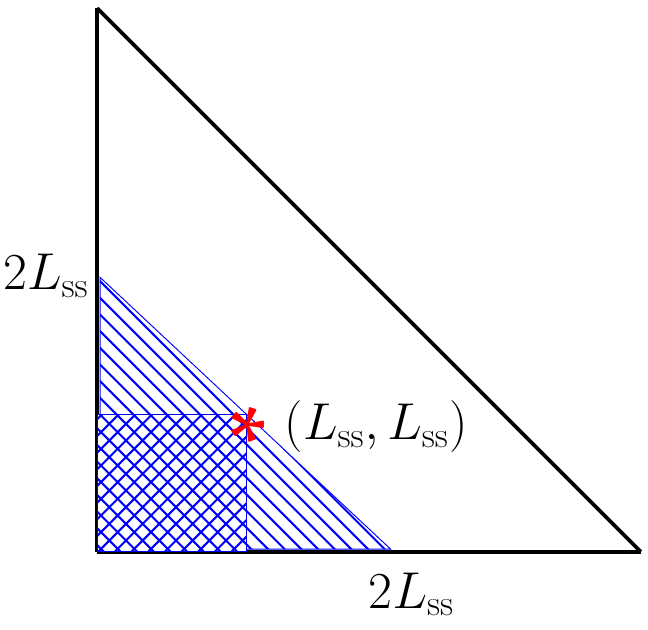}
         \caption{Schematic of the triangular region $\Omega^t$ used in the proof. $\Omega^t$ is illustrated by the diagonal pattern; it is defined as the region in which $\Delta L_1 + \Delta L_2 <0$. The subset of $\Omega^t$ in which both $\Delta L_1<0$ and $\Delta L_2<0$ is illustrated by the crosshatched pattern.}
         \label{fig:proof_schema}
\end{figure}
On this subset, if both $\Delta L_1$ and $\Delta L_2$ are negative, then
\begin{equation}
\abs{\Delta L_1-\Delta L_2} < \abs{\Delta L_1+\Delta L_2},
\end{equation}
and moreover $\abs{\Delta L_1-\Delta L_2} < \abs{\Delta L_1}+\abs{\Delta L_2} <2L_\text{ss}$,
so that
\begin{equation}
\frac{\pi_3}{2} \abs{\Delta L_1-\Delta L_2}^2\abs{\Delta L_1+\Delta L_2} \le 
\pi_3 L_\text{ss}\abs{\Delta L_1+\Delta L_2}^2,
\end{equation}
and since this term already appears in \eqref{eq:lyap_claim_int4} with a minus sign it follows that the right-hand side of \eqref{eq:lyap_claim_int4} is negative.

On the other hand, if $\Delta L_1$ and $\Delta L_2$ have opposite signs, then we must use a different bound. From the geometry of the relevant triangular subset $\Omega^t \subset \Omega$, we have as before $\abs{\Delta L_1+\Delta L_2} \le 2L_\text{ss}$ so that
\begin{equation}
\frac{\pi_3}{2} \abs{\Delta L_1-\Delta L_2}^2\abs{\Delta L_1+\Delta L_2} \le 
\pi_3 L_\text{ss}\abs{\Delta L_1-\Delta L_2}^2.
\end{equation}
From the condition that the flagella initially grow if they start at zero length, we have $\pi_0 < 1$, and from the steady-state equation \eqref{eq:active_ss} it follows that $L_\text{ss} \le 1/2$. Therefore
\begin{align}
2\pi_0 J \pi_3 L_\text{ss}\abs{\Delta L_1-\Delta L_2}^2 < \pi_3 J \abs{\Delta L_1-\Delta L_2}^2.
\end{align}
It follows that $(\pi_3/2) \abs{\Delta L_1-\Delta L_2}^2\abs{\Delta L_1+\Delta L_2} $ is bounded above by $\pi_3/\pi_1\left(\pi_1J\left(\Delta L_1-\Delta L_2\right)^2\right)$, where $\pi_1J\left(\Delta L_1-\Delta L_2\right)^2$ is the right-hand side of \eqref{eq:active2_interm2}. Therefore,
\begin{equation}
\frac{\ud \phi}{\ud t} =\oh \left(\frac{\ud (\Delta L_1 + \Delta L_2)^2}{\ud t}+\frac{\pi_3}{\pi_1}\frac{\ud (\Delta L_1 - \Delta L_2)^2}{\ud t}\right) < 0.
\end{equation}
\end{proof}

Next, we consider whether $\Omega$ is a trapping region closed to the dynamics. That is, we must consider whether any trajectories originating in $\Omega$ may eventually leave $\Omega$. First, we show that no trajectories cross the $L_1+L_2=1$ boundary. To show that all trajectories move inward from this boundary, we calculate
\begin{align}
 \begin{pmatrix}
  L_1 \\ L_2
 \end{pmatrix}'\Big|_{L_1+L_2=1} \cdot
 \begin{pmatrix}
  -1 \\ -1
 \end{pmatrix}
 &= 
  -(J(1-1)-\pi_0-\pi_1JL_1)-(J(1-1)-\pi_0-\pi_1JL_2) \notag \\
  &=2\pi_0+\pi_1J(L_1+L_2) > 0.
\end{align}
This proves that trajectories flow inward cannot leave $\Omega$ across the $L_1+L_2=1$ boundary. See Figure \ref{fig:lyap_domain} for an illustration of $\Omega$ as well as the slope field and contours of $\phi(\Delta L_1,\Delta L_2)$.
\begin{figure}[]
         \centering
	 \includegraphics[width=0.9\textwidth]{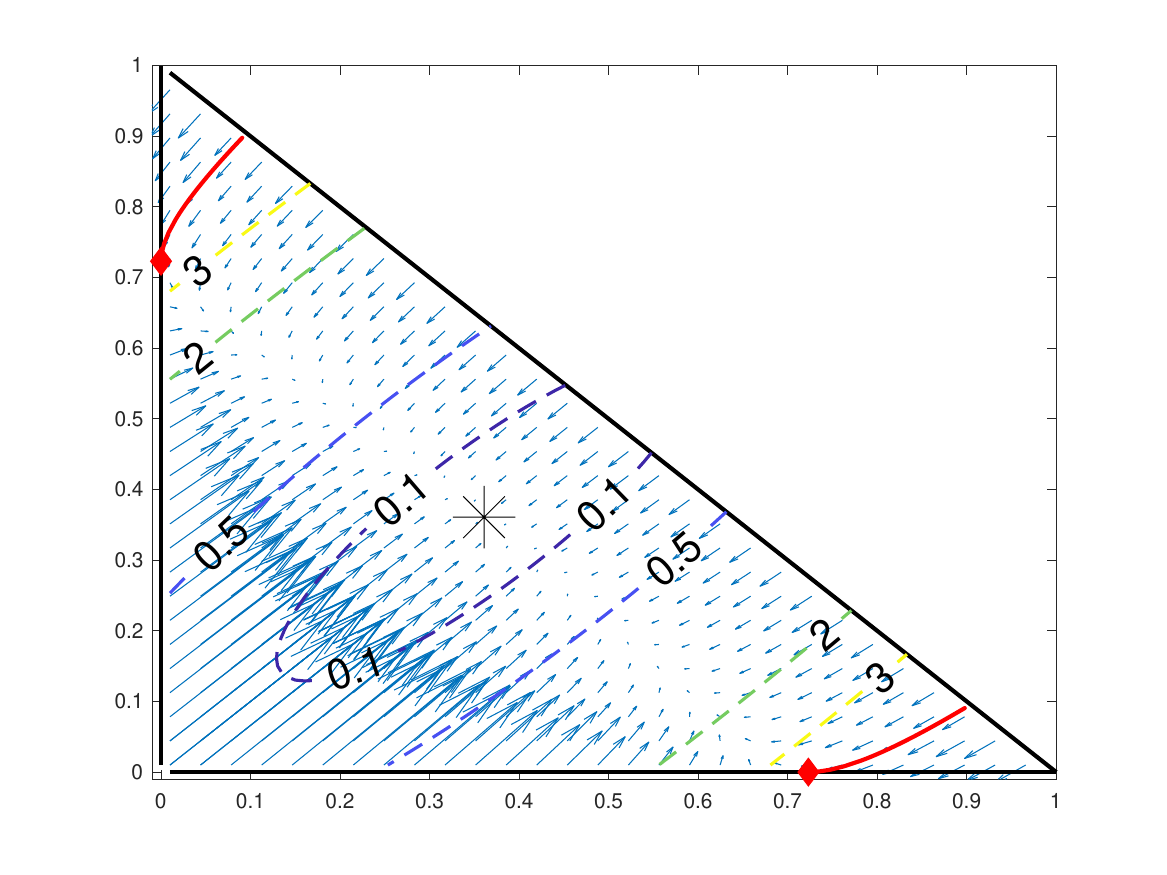}
         \caption{Dynamics of the active disassembly model. The slope field is plotted and contours $\phi(\Delta L_1,\Delta L_2) = C$ of the Lyapunov function are shown for several values of $C$. The black asterisk denotes the unique steady state, the black curves represent boundaries of $\Omega$, and the red diamonds represent the points $(L_\text{crit},0)$ and $(0,L_\text{crit})$ to which trajectories contract from the corner regions. The parameters used are $\pi_0=0.1$, $\pi_1=0.15$, $\pi_2=1$, and $\pi_3=2$.}
         \label{fig:lyap_domain}
\end{figure}

In addition, for $\Omega$ to be a trapping region, no trajectories should escape the first quadrant. In terms of $\Omega$, this means that no trajectories cross the $L_1$ axis on the interval $0 < L_1 \le 1$, and no trajectories cross the $L_2$ axis on the interval $0 < L_2 \le 1$. However, the model \cref{eq:active1_nondim,eq:active2_nondim} does not satisfy this requirement. To the contrary,
\begin{align}
\begin{pmatrix}
  L_1 \\ L_2
 \end{pmatrix}'\Big|_{0 < L_1 \le 1, L_2 = 0} \cdot
 \begin{pmatrix}
  0 \\ 1
 \end{pmatrix}
 &= 
J (1-L_1)-\pi_0,
\end{align}
and upon solving the quadratic equation resulting from $ J (T-L_1)-\pi_0= 0$ one finds that there is a unique positive root
\begin{equation}
L_\text{crit} = \frac{1+\pi_0 \pi_2}{2\pi_0 \pi_3}\left( \sqrt{1+\frac{4\left(1-\pi_0\right)\pi_0 \pi_3}{\left(1+\pi_0 \pi_2\right)^2}}-1\right).
\end{equation}
Therefore, to prevent trajectories that cross the $L_1$-axis resulting in negative lengths, we construct a hybrid system by setting $L_2'=0$ when $L_1 > L_\text{crit}$ on $L_2=0$. This results in trajectories that flow down to the $L_1$-axis, move horizontally according to this modified dynamics with $L_2'=0$ until they reach $(L_1,L_2)=(L_\text{crit},0)$, at which point they rejoin the full dynamics and move toward steady state. The dynamics are modified symmetrically to prevent trajectories from crossing the $L_2$-axis as well.

To characterize the trapping region of the full dynamics, we therefore define the trajectory $\Gamma_1(t)$ going backward in time from the point $(L_\text{crit},0)$, as defined by
\begin{align}
\Gamma_1(0) = (L_\text{crit},0),
\end{align}
up until it hits the diagonal $L_1+L_2 = 1$. We define another trajectory $\Gamma_2(t)$ similarly as the one going backward in time from the point $(0,L_\text{crit})$. The points from which the trajectories $\Gamma_1(t)$ and $\Gamma_2(t)$ originate are illustrated by red diamonds and the trajectories are shown as red curves in Fig.~\ref{fig:lyap_domain}.

As a consequence of the modified dynamics, trajectories originating in the corner regions of phase space bounded by $\Gamma_1(t)$ and $\Gamma_2(t)$ consist of three distinct phases: first, there is flow down to the $L_1$ or $L_2$-axis, followed next by horizontal or vertical motion to the point $(L_\text{crit},0)$ or $(0,L_\text{crit})$, respectively, and finally leading to recovery by the full dynamics back to steady state. One interesting consequence of this behavior is its interpretation in the context of severing experiments. We find that the typical severing protocol, in which the two flagella first reach steady-state before severing takes place, does not lead to states within the corner region (purple trajectory in Fig.~\ref{fig:lyap_sever}). 
\begin{figure}[]
         \vspace{-.5in}
         \centering
	 \includegraphics[width=.8\textwidth]{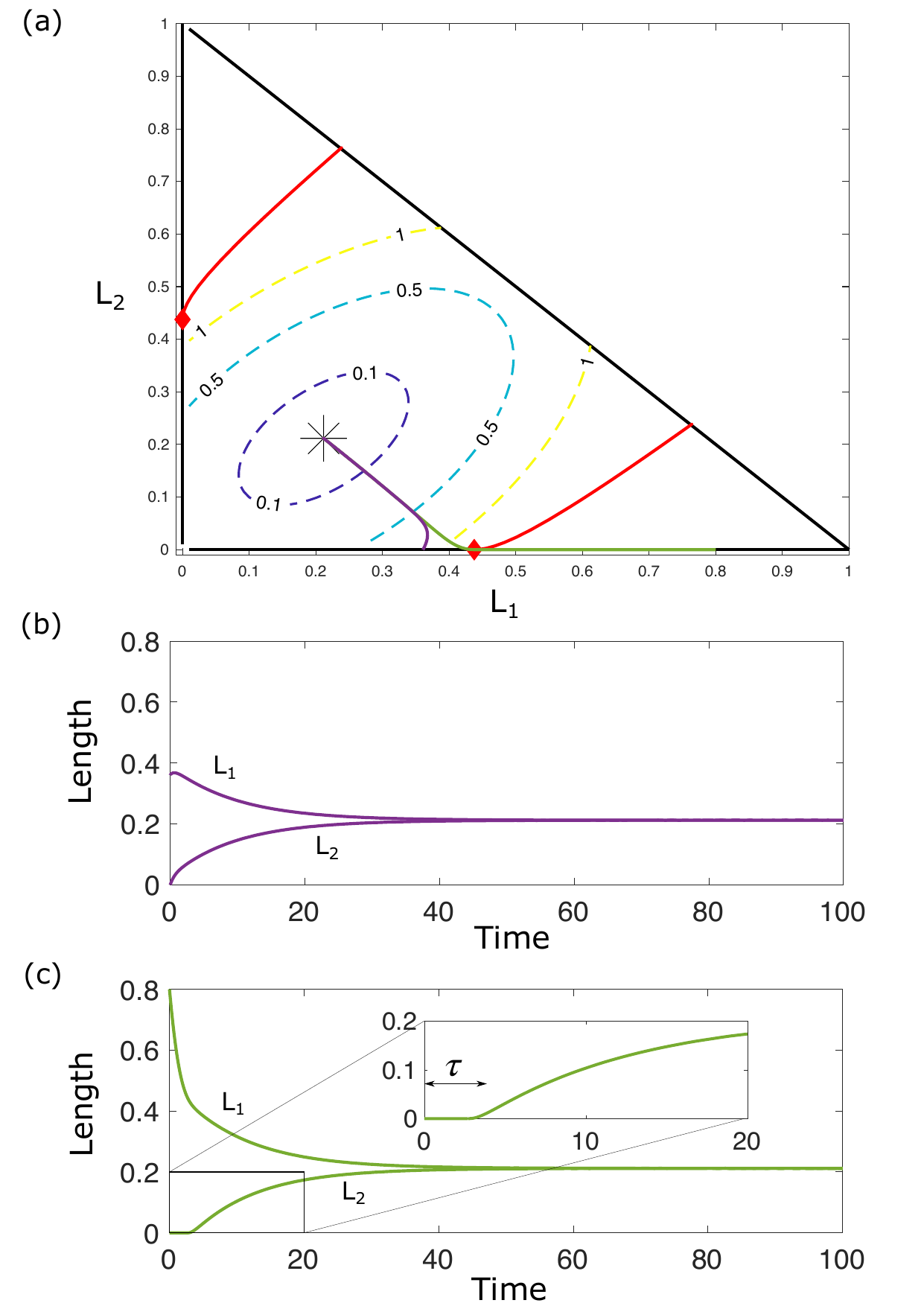}
	 \caption{Upon severing, the pool size instantaneously decreases by the steady-state length $L_\text{ss}$. This leads to a rescaling of parameters. (a) A severing experiment is illustrated by plotting two trajectories in the $L_1$-$L_2$ phase plane in green and purple, (b) and (c) Length vs. time corresponding respectively to the green and purple trajectories in (a). To allow for direct comparison to Fig.~\ref{fig:lyap_domain}, the nondimensional parameters are obtained using the initial tubulin pool size prior to severing. The green trajectory is an example in which the flagella are initially equal at a value greater than the steady-state length. There is an initial delay $\tau$ as the longer flagellum shortens, during which the severed flagellum remains at zero length (see inset in (c)).}
\label{fig:lyap_sever}
\end{figure}

Fig.~\ref{fig:lyap_sever} shows a representative simulation of a severing experiment. The code used to generate this simulation and the corresponding figures is included in our open source code repository located at \url{https://github.com/thomasgfai/StabilityFlagellarLengths}.

If the system is initially out of steady-state, however, severing may result in a state outside the trapping region. This yields two phases of flow after severing: first, a horizontal flow to the accumulation point $L_\text{crit}$, and second a flow by the full dynamics to reach the global steady-state (green trajectory in Fig.~\ref{fig:lyap_sever}). As shown in the inset of Fig.~\ref{fig:lyap_sever}(c), this first phase of horizontal flow leads to an apparent delay in the response of the cut flagellum.

{\color{black} As mentioned above, the corner regions that lead to negative lengths in the original active disassembly model are not encountered in typical biological scenarios. One may nevertheless ask} how a model derived from physical principles could lead to negative lengths. In our case, this behavior arises because of the basal disassembly rate $d_0$ that appears in the model, which implies a nonzero rate of disassembly even when the flagellum is very small. In the corner regions of the phase space, this basal term is greater than the rate of assembly because the tubulin pool is nearly depleted. This leads to a negative growth rate at zero length. Note that this issue does not arise in agent-based models in which each tubulin monomer is treated explicitly, since in that case disassembly only occurs when the flagellum contains at least one monomer. Within the ODE model, the modified dynamics ensure that disassembly only occurs when there are tubulin monomers remaining on the flagellum. {\color{black} Alternatively, one could rectify this issue by defining $d_0=d_0(L)$ to be a step function of length that turns off once a flagellum is sufficiently small.}

\subsection{Nonexistence of Oscillations: Alternate Proof}
Recall that we have an invariant set on the domain $\Omega$ by construction. In this section, we use an alternative approach to exclude oscillations using standard methods in dynamical systems. The proof will primarily focus on the region of $\Omega$ containing the fixed point $(L_\text{ss},L_\text{ss})$ that is bounded by the lines $L_2=L_1\pm \ve$:
\begin{equation*}
\Omega_\ve := \{(L_1,L_2) \,| \, 0 \leq L_1,\, 0\leq L_2, \,L_1 + L_2\leq T,\,L_2 \leq L_1 + \ve,\,L_1 - \ve \leq L_2 \},
\end{equation*}
where $0< \ve \leq L_\text{crit}$. An example of this domain is shown in Figure \ref{fig:alt} for $\ve=L_\text{crit}$ (checkerboard region). The boundary of $\Omega_\ve$ for $\ve=0.2$ is shown using dotted lines, labeled $L_2 = L_1\pm \ve$. The smaller the value of $\ve$, the thinner the domain $\Omega_\ve$ will be. We will first show that $\Omega_\ve$ is a trapping region (invariant set) for each $\ve$ arbitrarily small, by showing that the flows across the sections defined by the lines $L_2=L_1\pm \ve$ are always inward towards the fixed point $L_{ss}$. This condition is sufficient because we have already shown that flow along the relevant boundaries of $\Omega$ point inwards, specifically, $(L_1,0)$ and $(0,L_2)$ for $L_1,L_2 \in [0,L_\text{crit}]$, and $L_1+L_2 = 1$. To prove the non-existence of oscillations, we assume it exists and show that a portion of the oscillation must escape $\Omega_\ve$ for some $\ve$ and can not return, which is a contradiction.
\begin{figure}
 	\vspace{-.5in}
 	\centering
 	\includegraphics[width=.7\textwidth]{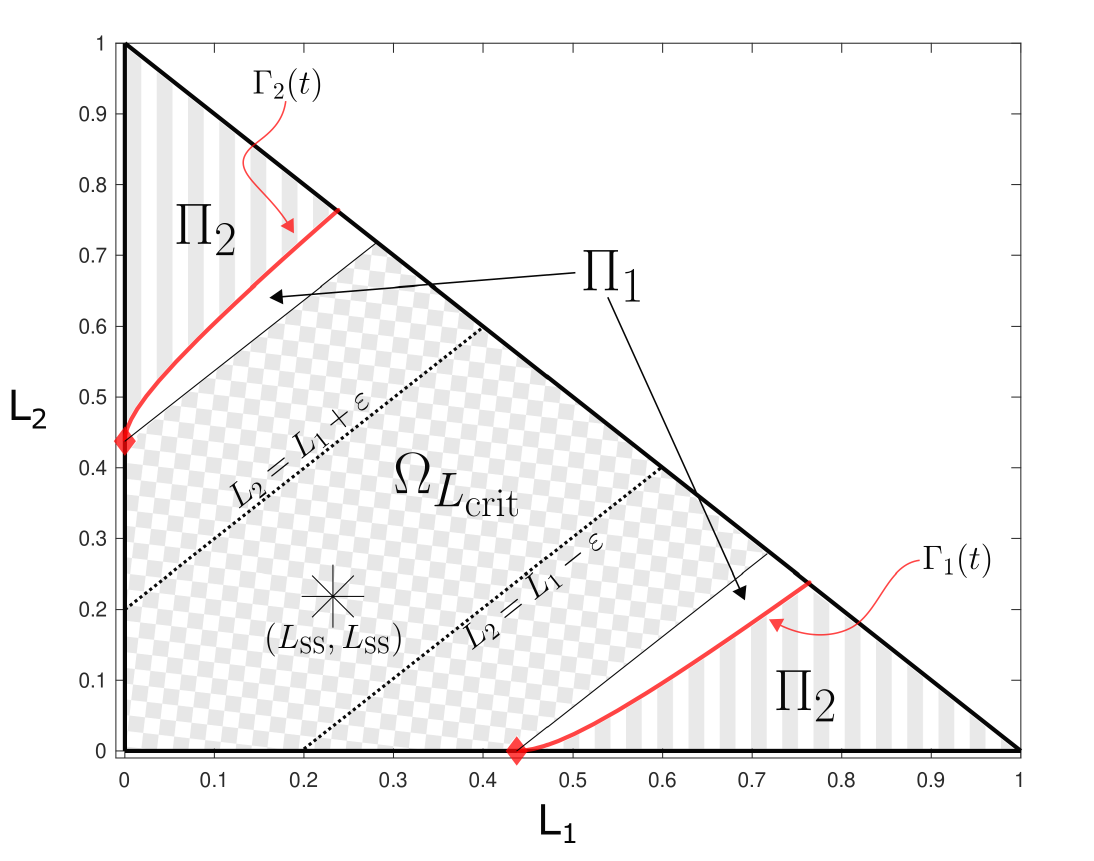}
 	\caption{Domains of the alternative proof. $\Omega_{L_\text{crit}}$ (checkerboard region) denotes the region $\Omega_\ve$ for $\ve=L_\text{crit}$. Dotted lines show an example boundary of $\Omega_\ve$ for $\ve=0.2$. $\Pi_1$ (white region) denotes the region between $\Omega_{L_\text{crit}}$ and the trajectories $\Gamma_1(t)$ and $\Gamma_2(t)$ (red curves). $\Pi_2$ (striped region) denotes the region outside of $\Omega_{L_\text{crit}}$ and $\pi_1$ but still inside the overall triangular domain $\Omega$. The parameters are the same as in Figure \ref{fig:lyap_sever}.}
 	\label{fig:alt}
 \end{figure}

\begin{claim*}
$\Omega_\ve$ is an invariant set.
\end{claim*}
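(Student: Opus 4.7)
The plan is to verify that the vector field points inward (or is tangent) along every portion of $\partial \Omega_\ve$. The boundary of $\Omega_\ve$ decomposes into five pieces: segments of the axes $L_1=0$ and $L_2=0$ (with $L_i \in [0,L_\text{crit}]$ since $\ve \leq L_\text{crit}$), the hypotenuse $L_1+L_2=1$, and the two new diagonal sides $L_2 = L_1+\ve$ and $L_2 = L_1-\ve$. The first three have already been handled earlier in the paper: the hybrid dynamics fix the axis segments, and the direct computation above the statement shows the flow is inward on $L_1+L_2=1$. So the substantive task is to check only the two diagonal sides.

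For these, I would invoke the already-derived identity \eqref{eq:active2_lyap1}, namely
\begin{equation*}
\frac{\ud L_1}{\ud t} - \frac{\ud L_2}{\ud t} = -\pi_1 J\bigl(\Delta L_1 - \Delta L_2\bigr) = -\pi_1 J\bigl(L_1 - L_2\bigr).
\end{equation*}
On the upper edge $L_2 = L_1 + \ve$, the outward normal is proportional to $(-1,1)$, so the outward component of the vector field is $L_2' - L_1' = \pi_1 J (L_1 - L_2) = -\pi_1 J \ve$. Since $J>0$ (by \eqref{eq:J_nondim}) and $\pi_1, \ve > 0$, this is strictly negative, hence trajectories cross strictly inward. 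The lower edge $L_2 = L_1 - \ve$ is symmetric: its outward normal is proportional to $(1,-1)$ and the outward component is $L_1' - L_2' = -\pi_1 J \ve < 0$, again inward.

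Having shown the vector field points strictly inward on the two new edges and non-outward on the inherited pieces of $\partial \Omega$ (the axis segments via the hybrid modification and $L_1+L_2=1$ via the earlier calculation), no trajectory starting in $\Omega_\ve$ can escape, so $\Omega_\ve$ is positively invariant. I would close by remarking that the bound is uniform in $\ve$ in the sense that the inward flux $\pi_1 J \ve$ is nonzero for every $\ve>0$, so the conclusion holds for arbitrarily small $\ve$, which is the property needed in the subsequent argument excluding oscillations.

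The only place any care is needed is in the corners where two edges of $\Omega_\ve$ meet, e.g. at $(L_1,L_2)=(\ve,0)$ or on the intersection of the diagonals with $L_1+L_2=1$. At these isolated corners one must check that the vector field lies in the inward cone defined by the two adjacent edges, but since each adjacent edge individually has a strictly inward component by the computations above, the corner case follows immediately. I do not anticipate this being an obstacle; the whole argument reduces to the single clean identity for $L_1' - L_2'$, which is exactly why the authors symmetrized the equations in \eqref{eq:active2_lyap1}.
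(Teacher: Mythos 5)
Your proposal is correct and follows essentially the same route as the paper: the key step in both is the identity $L_1'-L_2'=-\pi_1 J(L_1-L_2)=\pi_1 J\ve>0$ on the edge $L_2=L_1+\ve$ (with the symmetric statement on $L_2=L_1-\ve$), combined with the previously established inward flow on the inherited pieces of $\partial\Omega$. Your additional remarks on the outward normals and the corner points are a slightly more careful packaging of the same argument, not a different one.
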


In the ensuing proof, we show that the flows across the sections defined by the lines $L_2=L_1\pm \ve$ are always inward towards the fixed point $L_{ss}$. Combined with the inward flow along the boundaries of $\Omega$, this inward flow across the lines $L_2=L_1\pm \ve$ is sufficient to prove the claim.
\begin{proof}
To show that the vector field points inwards for a given $\Omega_\ve$, we show that $dL_1/dt > dL_2/dt$ along $L_2=L_1+\ve$ and $\ud L_1/ \ud t < \ud L_2/ \ud t$ along $L_2 = L_1 - \ve$. First, we plug in $L_2=L_1+\ve$ into the right hand side of \cref{eq:active1_nondim,eq:active2_nondim} and take the difference $\ud L_1/ \ud t-\ud L_2/ \ud t$:
\begin{align*}
\frac{\ud L_1}{\ud t} - \frac{\ud L_2}{\ud t} &= J(1-L_1-L_2)-\pi_0-\pi_1 J L_1-\left[J(1-L_1-L_2)-\pi_0-\pi_1 JL_2\right]\\
&= -\pi_1JL_1 + \pi_1 J(L_1+\ve)\\
&=\pi_1 J \ve > 0.
\end{align*}
Therefore, $\ud L_1/ \ud t>\ud L_2/\ud t$ independent of the choice of $\ve$. Inward flow along the lower boundary follows using the same argument. Thus, for each $\ve$, the set ${\Omega}_\ve$ is invariant.

We now proceed with a proof by contradiction. Suppose that there exists a periodic solution. Every oscillating solution must surround at least one fixed point (see index theory, \cite{guckenheimer1983local}), and the only fixed point of this system exists within ${\Omega}_\ve$. Therefore, the oscillating solution surrounds the fixed point $(L_\text{ss},L_\text{ss})$, and there exists some $\ve^*$ such that parts of the periodic solution lie outside of ${\Omega}_{\ve^*}$. (If an oscillating solution is fully contained within ${\Omega}_\ve$ for some $\ve$, to obtain $\ve^*$ we may simply reduce $\ve$ until parts of the periodic solution lie outside of ${\Omega}_{\ve^*}$.) It follows that solutions must exit ${\Omega}_{\ve^*}$, but this contradicts ${\Omega}_{\ve^*}$ being an invariant set.
\end{proof}

\subsection{Global Asymptotic Stability: Alternate Proof}
\begin{claim*}
The fixed point $(L_\text{ss},L_\text{ss})$ is globally asymptotically stable in $\Omega$.
\end{claim*}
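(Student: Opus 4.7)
The plan is to combine local asymptotic stability of $(L_\text{ss},L_\text{ss})$ with global attraction in $\Omega$; the latter will follow from the Poincar\'e--Bendixson theorem together with the non-existence of periodic orbits established in the previous subsection.

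First I would establish local asymptotic stability by linearization. Under the symmetry $L_1\leftrightarrow L_2$, the Jacobian at the fixed point is simultaneously diagonalized in the $(1,1)^\Trans$ and $(1,-1)^\Trans$ directions. The eigenvalue in the antisymmetric direction can be read off immediately from the difference equation \eqref{eq:active2_lyap1} as $-\pi_1 J_\text{ss}<0$. For the symmetric direction, I would linearize the sum equation \eqref{eq:active1_lyap1} in $\Delta L_1+\Delta L_2$ about steady state and verify negativity of the resulting eigenvalue by direct computation using the explicit form of $J$ in \eqref{eq:J_nondim}. This establishes that $(L_\text{ss},L_\text{ss})$ is Lyapunov stable and locally attracting.

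Next, for global attraction I would apply the Poincar\'e--Bendixson theorem on the compact invariant region $\Omega$. Any forward trajectory has a nonempty, compact, connected $\omega$-limit set. Since periodic orbits are ruled out by the previous subsection and $(L_\text{ss},L_\text{ss})$ is the unique interior fixed point of the full dynamics, no nontrivial heteroclinic or homoclinic cycle can form, forcing the $\omega$-limit set to equal $\{(L_\text{ss},L_\text{ss})\}$. For trajectories originating in the corner regions bounded by $\omega_1(t)$ and $\omega_2(t)$, I would argue that the hybrid modification carries them in finite time to $(L_\text{crit},0)$ or $(0,L_\text{crit})$ and the full dynamics then push them off the axis into the interior of $\Omega$ (since at these accumulation points $\ud L_1/\ud t\neq 0$ and the sign of $\ud L_2/\ud t$ along the axis flips precisely at $L_\text{crit}$), after which the preceding argument applies.

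The main obstacle is the careful treatment of the hybrid dynamics along the axes, since Poincar\'e--Bendixson is classically stated for smooth planar flows. The argument requires verifying that trajectories spend only finite time on the invariant axis segments before rejoining the full dynamics, and that the accumulation points $(L_\text{crit},0)$ and $(0,L_\text{crit})$ are not equilibria of the full dynamics, both of which follow from explicit evaluation of the vector field already computed earlier in the paper. Once local asymptotic stability and global attraction are in hand, they combine in the standard way to yield global asymptotic stability on all of $\Omega$.
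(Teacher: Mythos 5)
Your proposal uses the same essential ingredients as the paper's proof: local asymptotic stability of $(L_\text{ss},L_\text{ss})$, the non-existence of periodic orbits from the previous subsection, and a Poincar\'e--Bendixson argument to force every $\omega$-limit set to be the fixed point, with a separate finite-time argument for trajectories caught in the corner regions. The one substantive difference is how the hybrid dynamics along the axes is handled, and here the paper's version is cleaner than yours. You apply Poincar\'e--Bendixson on all of $\Omega$ and then flag "careful treatment of the hybrid dynamics" as the main obstacle, since the classical theorem requires a smooth planar flow and the modified vector field is discontinuous on the axis segments $L_1>L_\text{crit}$, $L_2=0$ and $L_2>L_\text{crit}$, $L_1=0$. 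The paper avoids this entirely by partitioning $\Omega$ into three closed pieces: the diagonal strip $\Omega_{L_\text{crit}}$ (the set $\Omega_\ve$ with $\ve=L_\text{crit}$, already shown to be forward-invariant under the \emph{unmodified} smooth dynamics, and whose intersection with the axes lies in $[0,L_\text{crit}]$ where the flow points inward), the region $\Pi_1$ between $\Omega_{L_\text{crit}}$ and the backward trajectories $\omega_1$, $\omega_2$, and the corner regions $\Pi_2$. It shows $\Pi_1$ and $\Pi_2$ contain no fixed points and funnel all trajectories into $\Omega_{L_\text{crit}}$ in finite time, and then applies the Poincar\'e--Bendixson-type trichotomy only inside $\Omega_{L_\text{crit}}$, where the flow is genuinely $C^1$. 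If you want your version to be airtight, you should adopt this restriction rather than patching the theorem on the full hybrid domain. Your explicit linearization (eigenvalues $-\pi_1 J_\text{ss}$ in the antisymmetric direction and a computed negative eigenvalue in the symmetric direction) is a nice addition; the paper instead leans on the earlier Lyapunov-function result to assert local asymptotic stability and to exclude the saddle-with-connecting-orbits alternative in the trichotomy.
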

\begin{proof}
We will consider three parts of the domain. First, $\Omega_{L_\text{crit}}:=\Omega_\ve$ for $\ve = L_\text{crit}$ (Figure \ref{fig:alt}, checkerboard region), second, the region between $\Omega_{L_\text{crit}}$ and $\Gamma_1$ which we call $\Pi_1$  (Figure \ref{fig:alt}, white region within $\Omega$), and third, the region $\Pi_2 := \Omega \setminus (\Omega_{L_\text{crit}} \cup \Pi_1)$  (Figure \ref{fig:alt}, striped region). The sets $\Omega_{L_\text{crit}}$, $\Pi_1$, and $\Pi_2$ are taken to be closed.

We begin our proof with $\Pi_1$ and $\Pi_2$ because they are the most straightforward. In $\Pi_1$, all solutions exit through one of the lines $L_2=L_1\pm L_\text{crit}$. This observation follows because there are no fixed points in $\Pi_1$ and flows cannot cross the solution $\Gamma_1$. In $\Pi_2$, we force all flows to exit through $(L_\text{crit},0)$ or $(0,L_\text{crit})$ by definition. Therefore, all flows originating in $\Pi_1 \cup \Pi_2$ will enter the trapping region $\Omega_{L_\text{crit}}$.

It remains to show asymptotic stability in $\Omega_{L_\text{crit}}$. Note that our system is continuously differentiable on all of $\Omega_{L_\text{crit}}$ and therefore existence and uniqueness of solutions holds for all time. Before proceeding with the proof of asymptotic stability, we recall two basic definitions.
\begin{definition}
	The \textbf{positive semiorbit} is defined as the set
	\begin{equation*}
	\Gamma(\bm{p}) = \{ \bm{x} \in \mathbb{R}^2\, : \, \bm{x} = \phi(t,\bm{p}), \, \text{for some} \,\,\, t \in (0,\infty)\}.
	\end{equation*}
\end{definition}

\begin{definition}
	The \textbf{$\omega$-limit set} of a point $\bm{x} \in M$, $\omega(\bm{x})$, is the set of those points $\bm{y} \in M$ for which there exists a sequence $t_n \rightarrow \infty$ with $\phi(t_n,\bm{x}) \rightarrow \bm{y}$, where $\phi$ is the flow of an autonomous ODE $\dot{\bm{x}} = \bm{F}(\bm{x})$, and $M$ is an open subset of $\mathbb{R}^2$ \cite{teschl2012ordinary}. The omega-limit set $\omega(\Gamma)$ for a trajectory $\Gamma$ is defined analogously.
\end{definition}

In the case that all solutions are bounded, only certain types of $\omega$-limit sets are possible. This is a consequence of the following theorem (Theorem 7.4.2 in \cite{lebovitz_2019}):
\begin{theorem}\label{theorem:bounded}
	Let $\Gamma$ be a positive semiorbit contained in a compact subset $K$ of $\mathbb{R}^2$ and suppose that $K$ contains only a finite number of equilibrium points. Then
	\begin{enumerate}
	        \item $\omega(\Gamma)$ consists of a single point $\bm{p}$ which is an equilibrium point of $\bm{F}$, and for an initial condition $\bm{x}(0)\in\gamma$, $\bm{x}(t)\rightarrow \bm{p}$ as $t \rightarrow \infty$, or
		\item $\omega(\Gamma)$ is a periodic orbit, or
		\item $\omega(\Gamma)$ consists of a finite set of equilibrium points together with their connecting orbits. Each such orbit approaches an equilibrium point on $t \rightarrow \infty$ and as $t \rightarrow -\infty$.
	\end{enumerate}
\end{theorem}
This theorem, which follows from the Poincar\'{e}-Bendixson theorem, tells us that if a solution is trapped in a compact domain for all time, then a single equilibrium must be either a unique asymptotically stable point or a saddle-like point with at least one pair of connected unstable and stable manifolds. (The saddle-like observation follows by the third part of the theorem; a single equilibrium can also be stable in backwards time.)

Because we have eliminated periodic orbits in $\Omega_{L_\text{crit}}$ and shown that our stable fixed point $(L_\text{ss},L_\text{ss})$ is locally asymptotically stable and therefore cannot be saddle-like, we must have an asymptotically stable equilibrium point in $\Omega_{L_\text{crit}}$. Finally, because all solutions in $\Omega\setminus\Omega_{L_\text{crit}}$ eventually reach the trapping region $\Omega_{L_\text{crit}}$, $(L_\text{ss},L_\text{ss})$ is the globally asymptotically stable equilibrium point in $\Omega$.
\end{proof}

\subsection{Generalization to organisms with $N >2$ flagella}
Next, we prove global asymptotic stability in the case of arbitrary flagellar number by constructing a Lyapunov function, following similar logic to that used in the $N=2$ case. In the following, we once again drop the tildes for convenience. Define $\Delta L_i = L_i-L_\text{ss}$ as before for $i=1,\dots,N$ and let
\begin{equation}
\phi(\Delta L_1,\dots,\Delta L_N) = \oh \left[\left(\sum_{i=1}^N \Delta L_i \right)^2+\frac{2\pi_3}{\pi_1}\sum_{i\neq j}\left(\Delta L_i-\Delta L_j\right)^2\right].
\label{eq:claimN}
\end{equation}
\begin{claim*}
$\phi$ is a Lyapunov function on $\Omega$.
\end{claim*}

\begin{proof}
We first note that setting the derivative to zero in \eqref{eq:activeN_nondim} yields the identify
\begin{align}
\label{eq:active_ssN}
0=J_\text{ss}\left(1-NL_\text{ss}\right)-\pi_0-\pi_1J_\text{ss}L_\text{ss}.
\end{align}
where
\begin{align}
J_\text{ss}=\frac{1}{1+N\pi_2 L_\text{ss}+N\pi_3 L_\text{ss}^2}.
\end{align}
Following the logic of the proof in the case $N=2$ results in the equations
\begin{align}
\oh \frac{\ud}{\ud t}\left(\sum_{i=1}^N \Delta L_i\right)^2 &= N \left(J-J_\text{ss}\right)\left(1-NL_\text{ss}-\pi_1 L_\text{ss}\right)\sum_{i=1}^N \Delta L_i-\left(N+\pi_1\right)J\left(\sum_{i=1}^N \Delta L_i\right)^2  \label{eq:active1N}\\
\oh \frac{\ud}{\ud t}\left( \Delta L_i-\Delta L_j\right)^2 &= -\pi_1J\left(\Delta L_i-\Delta L_j\right)^2,\quad \text{for }i\neq j. \label{eq:active2N}
\end{align}
From the steady-state identify \eqref{eq:active_ssN}, we have
\begin{equation}
1-NL_\text{ss}-\pi_1 L_\text{ss} = \frac{\pi_0}{J_\text{ss}},
\end{equation}
so that the first term on the right-hand side of \eqref{eq:active1N} may be rewritten as
\begin{align}
N \left(J-J_\text{ss}\right)\left(1-NL_\text{ss}-\pi_1 L_\text{ss}\right)\sum_{i=1}^N \Delta L_i
&=  \frac{N\pi_0}{J_\text{ss}}\left(J-J_\text{ss}\right)\sum_{i=1}^N \Delta L_i \notag \\
&= N\pi_0 J\left(J_\text{ss}^{-1}-J^{-1}\right)\sum_{i=1}^N \Delta L_i.
\label{eq:lyap_claim_int1N}
\end{align}
The expression above may be simplified further by using the definition of the flux to obtain
\begin{align}
J_\text{ss}^{-1}-J^{-1} = -\pi_2 \sum_{i=1}^N \Delta L_i -\pi_3 \sum_{i=1}^N \left(L_i^2-L_\text{ss}^2\right),
\label{eq:lyap_claim_int2N}
\end{align}
and noting that
\begin{align}
\sum_{i=1}^N \left(L_i^2-L_\text{ss}^2\right) &= \sum_{i=1}^N\left(L_i+L_\text{ss}\right)\Delta L_i \notag \\
&=  \sum_{i=1}^N L_i \Delta L_i+L_\text{ss} \sum_{i=1}^N \Delta L_i.
\label{eq:lyap_claim_int3N}
\end{align}
Moreover, we may rewrite the first term on the right-hand side above by adding and subtracting $(\frac{1}{N}\sum_{i=1}^N L_i)\sum_{i=1}^N \Delta L_i$ to obtain
\begin{align}
\sum_{i=1}^N L_i \Delta L_i &= \left(\frac{1}{N}\sum_{i=1}^N L_i\right)\sum_{i=1}^N \Delta L_i+\sum_{i=1}^N \left(L_i-\frac{1}{N}\sum_{i=1}^N L_i\right)\Delta L_i \notag \\
&= \left(\frac{1}{N}\sum_{i=1}^N L_i\right)\sum_{i=1}^N \Delta L_i+\frac{1}{N}\sum_{i=1}^N\left(\sum_{j\neq i} \Delta L_i-\Delta L_j\right)\Delta L_i.
\label{eq:lyap_claim_int4N}
\end{align}
Combining \cref{eq:lyap_claim_int1N,eq:lyap_claim_int2N,eq:lyap_claim_int3N,eq:lyap_claim_int4N} yields
\begin{align}
N\pi_0 J\left(J_\text{ss}^{-1}-J^{-1}\right)\sum_{i=1}^N \Delta L_i =& -N\pi_0 J\left(
\pi_2 \left(\sum_{i=1}^N \Delta L_i\right)^2+\pi_3 L_\text{ss} \left(\sum_{i=1}^N \Delta L_i\right)^2
\right.
\notag \\
&\left.+\frac{\pi_3}{N}\left(\sum_{i=1}^N L_i \right)\left(\sum_{i=1}^N\Delta L_i\right)^2
+\frac{\pi_3}{N}\left(\sum_{i=1}^N \Delta L_i\right)\sum_{i\neq j}\left(\Delta L_i-\Delta L_j\right)^2
\right).
\label{eq:lyap_claim_int5N}
\end{align}
The first three terms in parentheses on the right-hand side of \eqref{eq:lyap_claim_int5N} are clearly positive. The final term may be bounded according to
\begin{equation}
\left|\frac{\pi_3}{N}\left(\sum_{i=1}^N \Delta L_i\right)\sum_{i\neq j}\left(\Delta L_i-\Delta L_j\right)^2\right|
\le \frac{\pi_3}{N}\left|\sum_{i=1}^N \Delta L_i \right| \left| \sum_{i\neq j}\left(\Delta L_i-\Delta L_j\right)^2 \right|
\le \frac{2 \pi_3}{N}\sum_{i\neq j} \left(\Delta L_i-\Delta L_j\right)^2,
\label{eq:lyap_claim_int6N}
\end{equation}
where the final inequality follows from $\left|\sum_{i=1}^N \Delta L_i \right| \le \left|\sum_{i=1}^N L_i \right|+ \left|\sum_{i=1}^N L_\text{ss} \right| \le 2$, by \eqref{eq:active_ssN} and the definition of $\Omega$, which asserts that the sum of lengths must not exceed the total pool size.

Therefore, substituting \cref{eq:lyap_claim_int5N,eq:lyap_claim_int6N} into \cref{eq:active1N,eq:active2N} results in
\begin{align}
\frac{\ud\phi}{\ud t} &\le 2 \pi_3\sum_{i\neq j} \left(\Delta L_i-\Delta L_j\right)^2-\frac{2\pi_3}{\pi_1}\sum_{i\neq j}\left(\pi_1 J \left(\Delta L_i-\Delta L_j\right)^2\right) \notag \\
&\le2 \pi_3\sum_{i\neq j} \left(\Delta L_i-\Delta L_j\right)^2-2 \pi_3\sum_{i\neq j} \left(\Delta L_i-\Delta L_j\right)^2\le0,
\end{align}
with equality obtained at the unique steady-state, i.e.~$\phi$ satisfies the requirement of a Lyapunov function.
\end{proof}
Note that the Lyapunov function we construct for general $N$ does not reduce to the Lyapunov function constructed previously in the case $N=2$. There is an additional factor of two in the second term of $\phi$. However, there is no inconsistency; the Lyapunov function is not unique. Still, the Lyapunov function constructed previously provides a tighter bound in the case $N=2$, and we conjecture that the factor of two in the second term of the Lyapunov function in \eqref{eq:claimN} is not essential.
 
\section{Discussion}
{\color{black} We have modified the active disassembly model originally proposed in \cite{fai2019length} by constructing a hybrid system to ensure that the flagellar lengths remain positive for all time, regardless of the initial conditions.} Further, we have used the method of Lyapunov functions (as well as the Poincar\'{e}-Bendixson Theorem) to prove the existence of an asymptotically stable steady-state. Although the Poincar\'{e}-Bendixson Theorem is limited to systems of two equations, the Lyapunov approach is shown to generalize to $N>2$.

The Lyapunov function provides significant constraints on the possible dynamics, e.g. there can be no periodic cycles or finite time blow-up and solutions flow inextricably down the gradients of the Lyapunov function toward the unique steady-state. This highly-constrained solution space is a testable prediction of the active disassembly model.
 


Finally, we discuss aspects of length control in the case of a large number of organelles. Indeed, this is a biologically-relevant case given that single-cell organisms such as \emph{Paramecium} cells swim using on the order of five thousand cilia \cite{funfak2015paramecium}. Depending on which parameters are fixed in the limit $N\to \infty$, our dimensional analysis of the active disassembly model yields different limiting behaviors. A remarkable consequence of the model is that it provides a universal length-control mechanism regardless of organelle number, and an important next step would be to establish the limits of the model by testing its predictions on different organisms over a range of $N$.

This generalization to arbitrary flagellar number reveals that while the active disassembly model was originally motivated by \emph{Chlamydomonas reinhardtii} with its two flagella, it is universal in the sense that it achieves length control for any number of organelles. {\color{black} We have further explore the implications of this universal mechanism for controlling the sizes of large numbers of organelles by applying the model to study olfactory sensory neurons. The model provides possible explanations for scaling relations between cell size and flagellar length observed in the data.}

{\color{black} An important caveat is that in organisms with more than two flagella, some of the assumptions of our model may need to be modified. For example, \emph{Giardia} has eight flagella organized in four pairs, with each pair having a different steady-state length \cite{mcinally2016eight}. In order to represent length control in \emph{Giardia} using the active disassembly model, the same overall modeling framework could be used with suitable modifications. For instance, each flagellar pair could be taken to have different parameters, such as the injection rate at the flagellar base, for example, or the tubulin and/or motor pools may be taken to be separate containing different amounts of protein for each of the four pairs.}

\section{Acknowledgments}
We thank Ariel Amir and Te Cao for providing feedback on an early version of this manuscript. We further acknowledge useful discussions with Prathitha Kar, Lishibanya Mohapatra, and Jane Kondev, and Rosemary Challis for providing data on cilia lengths in olfactory sensory neurons. We acknowledge support under National Science Foundation grant DMS-1913093 (TGF) and National Institute of Health grant T32 NS007292 (YP).

\begin{appendices}
{\color{black}
\section{Flux derivation}
\label{app:flux}
Consider the lengths of two flagella and their evolution in time, which we denote by $L_1(t)$ and $L_2(t)$.   As stated in Section \ref{sec:model}, we assume a total number of molecular motors $M$ and a total amount of tubulin $T$.

We now derive the dynamical equations for the flagellar lengths, closely following \cite{fai2019length}. Assembly occurs in two steps in our model, each of which follows the law of first-order chemical kinetics. First, tubulin aggregates on an IFT particle containing a single molecular motor. Second, the IFT particles are injected into the flagellum. The first step of protein aggregation on IFT particles results in particles carrying an amount of tubulin proportional to $T_f$. In the second step, i.e.~the injection of IFT particles at the flagellar base, the number $M_f$ of free molecular motors determines the injection rate $J_i$ via:
 \begin{equation}
 J_i= \oh k_\text{on} M_f,
 \end{equation}
  where $k_\text{on}$ is the rate constant of injection and the factor of one half is due to equal probabilities of injection into either flagellum.  
    
 Because the timescale of IFT (order of seconds) is fast compared to the timescale of changes in the flagellar lengths (order of minutes), we assume a quasi-steady state in which there is no accumulation of IFT particles along the flagellum. Therefore, the rate of injection equals the arrival flux of IFT particles to the flagellar tip.
 
 The assembly rate is assumed proportional to this arrival flux times the average amount of tubulin carried by an IFT particle, resulting in
 \begin{equation}
\text{assembly rate} = \gamma J_i T_f = \oh \gamma k_\text{on} M_f T_f,
\label{eq:inj}
 \end{equation}
 where $\gamma$ is a constant of proportionality.
 
 As mentioned above $T_f = T-L_1-L_2$. In order to express $M_f$ in terms of the flagellar lengths, we must incorporate the detailed motion of proteins during IFT. While, as mentioned above, IFT particles including the motors and cargo move at constant speed in the anterograde direction, it has been shown that some IFT proteins such as kinesin motors diffuse in the retrograde direction back to the flagellar base \cite{chien2017dynamics}. We capture these different possibilities by letting $M_{b,i}$ and $M_{d_i}$ represent the numbers of motors undergoing ballistic and diffusive motion, respectively, on the $i$\textsuperscript{th} flagellum. By conservation of motor number, we have
 \begin{equation}
 M_f = M-M_{b,1}-M_{b,2}-M_{d,1}-M_{d,2}
 \label{eq:Mcons}
 \end{equation}
 
We may express the ballistic flux $J_{b,i}$ on the $i$\textsuperscript{th} flagellum as the product of the concentration of motors moving ballistically in the anterograde direction times their speed:
\begin{equation}
J_{b,i}= \frac{M_{b,i}v}{L_i},
\label{eq:Mball}
\end{equation}
where $v$ is the IFT speed in the anterograde direction.
 
On the other hand,  To capture this diffusive motion, we let the concentration of diffusing particles on the $i$\textsuperscript{th} flagellum be $c_{d,i}(x)$ for positions $x\in[0,L_i]$ along the flagellum. If the diffusive flux $J_{d,i}$ is constant, then by Fick's law $J = -D \partial c/\partial x$ so that the concentration will be a linear function of $x$, and in particular
\begin{equation}
c_{d,i}(x) = \frac{M_{d,i}}{L_i}+\frac{J_{d,i}}{D}\left(x-L_i/2\right),
\label{eq:Mconc}
\end{equation}
where we have used the fact that $\int_0^{L_i} c_{d,i} \ud x = M_{d,i}$.

Assuming the flagellar base acts as a sink, we may apply the boundary condition $c_{d,i}(0)=0$ to \eqref{eq:Mconc} and express the flux in terms of the number of diffusing motors:
\begin{equation}
 J_{d,i} = \frac{2DM_{d,i}}{L_i^2},
 \label{eq:Mdiff}
 \end{equation}
 and it follows that
 \begin{equation}
c_{d,i}(x) = \frac{J_{d,i}}{D}x.
\label{eq:Mconc2}
\end{equation}

Since $J_i=J_{b,1}=J_{b,2}=J_{d,1}=J_{d,2}$ by the quasi-steady state assumption, we drop the subscripts and denote the constant flux by $J$. Plugging the expressions from \cref{eq:Mcons,eq:Mball,eq:Mdiff} into \eqref{eq:inj} yields
\begin{equation*}
J = \oh k_\text{on}\left(M-J\frac{L_1+L_2}{2v}+J\frac{L_1^2+L_2^2}{4D}\right),
\end{equation*}
and upon solving for $J$ we obtain
\begin{equation}
J = \frac{k_\text{on}M/2}{1+k_\text{on}\left(L_1+L_2\right)/2v+k_\text{on}\left(L_1^2+L_2^2\right)/4D}.
\label{eq:Jsolve}
\end{equation}
Inserting this expression into \eqref{eq:inj} yields finally
 \begin{equation}
\text{assembly rate} = \gamma J \left(T-L_1-L_2\right),
 \end{equation}
 where $J$ is given by \eqref{eq:Jsolve}.
 
 For the disassembly rate, we assume there is both a basal rate of disassembly $d_0$ as well as an additional disassembly which depends on the concentration of depolymerase at the flagellar tip. For simplicity, we will assume that the depolymerase concentration at the $i$\textsuperscript{th} flagellum is proportional to the kinesin concentration $c_{d,i}(L_i)$. (As discussed in \cite{fai2019length}, this would be the case so long as the depolymerase follows the same pattern of ballistic anterograde to diffusive retrograde motion.) Therefore, we have
  \begin{equation}
\text{disassembly rate} = d_0+d_1 c_{d,i}(L_i) = d_0+d_1\frac{J L_i}{D}.
 \end{equation}
 
 Subtracting the assembly and disassembly rates yields the following system of coupled nonlinear ODE's:
 \begin{align}
 \frac{\ud L_1}{\ud t} &= \gamma J \left(T-L_1-L_2\right)-d_0-d_1\frac{J L_1}{D} \\
  \frac{\ud L_2}{\ud t} &= \gamma J \left(T-L_1-L_2\right)-d_0-d_1\frac{J L_2}{D},
 \end{align}
 where $J$ is given by \eqref{eq:Jsolve} above.
 }
 \end{appendices}

\end{document}